\newtheorem{problem}{Problem}
\newtheorem{theorem}{Theorem}
\newtheorem{definition}{Definition}
\newtheorem{lemma}{Lemma}
\newcommand{\prob}[1]{\textsc{#1}}
\newcommand{\cc}[1]{\textup{\textsf{#1}}}
\DeclareMathOperator{\poly}{poly}
\DeclareMathOperator{\snub}{Snub}
\DeclareMathOperator{\trifreeergm}{TriFreeERGM}
\title{ERGMs are Hard}
\date{}
\author{Michael J. Bannister}
\author{William E. Devanny}
\author{David Eppstein}
\affil{Department of Computer Science, University of California, Irvine}
\begin{document}
\maketitle
\thispagestyle{empty}
\begin{abstract}
We investigate the computational complexity of the exponential random graph model (ERGM) commonly used in social network analysis. This model represents a probability distribution on graphs by setting the log-likelihood of generating a graph to be a weighted sum of feature counts. These log-likelihoods must be exponentiated and then normalized to produce probabilities, and the normalizing constant is called the \emph{partition function}. We show that the problem of computing the partition function is $\cc{\#P}$-hard, and inapproximable in polynomial time to within an exponential ratio, assuming $\cc{P} \neq \cc{NP}$. Furthermore, there is no randomized polynomial time algorithm for generating random graphs whose distribution is within total variation distance $1-o(1)$ of a given ERGM. Our proofs use standard feature types based on the sociological theories of assortative mixing and triadic closure.
\end{abstract}

\newpage
\setcounter{page}{1}
\section{Introduction}

An \emph{exponential random graph model} (ERGM) is a mathematical description of a probability distribution over the graphs with a given fixed set of vertices, used in social network analysis~\cite{WasPat-Psy-96,PatWas-BJMSP-99,RobPatWas-Psy-99,HunHanBut-JSS-08}. In this model, every graph is mapped to a \emph{feature vector} (typically a Boolean or integer vector describing the existence or number of local structures such as vertices with given degrees or small induced subgraphs of a given type) and the log-likelihood of each graph is determined as the inner product of its feature vector with a weight vector specified as part of the model.  ERGMs were first developed by Holland and Leinhardt to study graphs arising from social relationships~\cite{HolLei-JASA-81}. In contrast to other probabilistic models for social networks, such as the uniform distribution on graphs with a given degree distribution~\cite{ChuLu-AoC-02}, the features and weights of an ERGM can correspond directly to sociological  theories of network formation. For instance, features that describe the presence or absence of individual edges may have weights derived from models of \emph{assortative mixing}, the theory that people with similar characteristics are more likely to form connections with each other~\cite{New-PRE-03}, while features describing the presence of triangle subgraphs may have weights derived from models of \emph{triadic closure}, the theory that friends-of-friends are likely to become connected~\cite{Dav-ASR-70}. By using machine learning algorithms to fit the weights of an ERGM to real-world data, sociologists may experimentally measure the strength of these effects and use them to test their theories.

Although there has been some theoretical research on speeding up the computation of feature vectors~\cite{EppSpi-JGAA-12,EppGooStr-TCS-12}, a low-level step in ERGM computations,  as well as research on ERGMs from the graph limit point of view~\cite{Lov-LNGL-12}, little has been known to date about the higher level computational complexity of these models. In order to solve key problems on ERGMs, including the problems of generating graphs from a given model, computing the \emph{partition function} of a model (a normalizing constant used to transform log-likelihoods into probabilities), and fitting weights to data, researchers have generally resorted to Monte Carlo methods, in which the Metropolis--Hastings algorithm is used to construct a Markov chain on the set of graphs defined by an ERGM, with its stable distribution equal to the distribution described by the ERGM. Then, this chain is simulated for a number of steps with heuristic termination conditions that are intended to detect convergence of the chain to its stable distribution~\cite{FraStr-JASA-86,HunHan-JCGS-06}. However, these methods have no guarantees on their running time, accuracy, mixing time, rate of convergence, or correctness of the termination detection method.

In this paper, for the first time, we investigate ERGMs from the point of view of their computational complexity. We explain the heuristic nature of previous computations involving these models by showing that several key computational problems involving ERGMs are intractable in the worst case. In particular, we show, for a family of ERGMs parameterized by the number $n$ of vertices whose description complexity (number of features and weights, and magnitude of the weights) is polynomial in $n$, that:

\begin{itemize}
\item Unless $\cc{P}=\cc{NP}$, there is no polynomial-time approximation to the partition function of a given ERGM that can achieve an approximation ratio exponential in any polynomial of~$n$.
\item Unless $\cc{P}=\cc{\#P}$, there is no polynomial-time algorithm to compute the partition function of a given ERGM.
\item Unless $\cc{RP}=\cc{NP}$, there is no randomized polynomial algorithm for generating random graphs whose output distribution is within total variation distance $1-o(1)$ of a given ERGM with variable weights.
\end{itemize}

Our results can be obtained using an ERGM with features that are very natural in social network analysis: an independent weight for each potential edge in a graph (representing different affinities between different pairs of actors in a network) and a single shared weight for all induced triangles (representing triad closure). As we show, an ERGM of this type can be used to describe a distribution that is very close to the uniform distribution on the maximum triangle-free subgraphs of a given graph. Our results follow from the known $\cc{NP}$-hardness of finding a maximum triangle-free subgraph~\cite{Yan-SJoC-81} and from a new $\cc{\#P}$-hardness proof for counting large triangle-free subgraphs. We also show that these problems remain equally hard for an ERGM in which the features are the induced subgraphs isomorphic to $H$, for every fixed graph $H$ with three or more vertices, so small adjustments to the types of features available in the model cannot make these problems easier. Thus, our results destroy all hope of guaranteed-quality polynomial-time computation for ERGM models of social networks.

From the point of view of theoretical computer science, our methods are mostly standard reductions. However, in this respect, our most innovative result may be the inapproximability of the ERGM partition function. There have been past results on $\cc{\#P}$-hardness and inapproximability of partition functions~\cite{Bar-JPMG-82} but these have generally involved systems of colorings or other decorations on a fixed underlying graph; instead, the states of an ERGM are themselves all possible graphs on a given vertex set.
Our inapproximability result is much stronger than the polynomial inapproximability known to hold for all $\cc{\#P}$ problems~\cite{JerValVaz-TCS-86}. The proof of this result avoids PCP theory, which has been applied in many recent inapproximability results~\cite{AroBar-CCamm-09}, and instead proceeds by a direct reduction.

\section{Preliminaries}

\paragraph{Exponential random graphs.}
An exponential family random graph model, or ERGM for short, is a distribution on random graphs that forms an exponential family.  The distributions of exponential families have the form
\[
Pr[x] = \frac{e^{\theta \cdot w(x)}}{Z(\theta)}
\]
where $\theta$ is a vector of weights, sometimes called the model parameters, $w(x)$ is a vector valued function that gives features of $x$, and $Z(\theta)$, often called the partition function, is a normalizing value chosen to make the probabilities sum to one.  For an ERGM, $x$ would be a graph on $n$ vertices and the features of $x$ would often include localized structures such as vertices with a given degree, paths with a given length, or cliques with a given number of vertices.

We will let $F$ be the set of features we use and $w_f$ be the weight corresponding to a feature $f \in F$.  It will be convenient for us to linearly transform the weights (by multiplying by $\log_2 e$), so that we can use $2$ rather than $e$ as the base of our exponential functions, allowing us to avoid real number computations and simplify our proofs. We define the density of a graph $G$ to be $d(G) = \prod_{f \in F} 2^{f(G)w_f}$ where $f(G)$ is the value of the feature $f$ in $G$.  The features we use will be $0$-$1$ indicators of the presence of particular induced subgraphs and counts of subgraph appearances.  The probability of a graph, $G$, is its density, $d(G)$, divided by a normalizing constant to make the probabilities sum to one.  This normalizing constant is $Z = \sum_G d(G)$, so $Pr[G] = \frac{d(G)}{Z}$; $Z$ is called the partition function of an ERGM.  In this paper we are chiefly concerned with two computational problems: calculating $Z$ and generating graphs from these distributions.

The types of features used in ERGMs are diverse.  In general features can be of two types, homogeneous or heterogeneous, according to whether or not all isomorphic graphs $G$ have the same value of $f(G)$ as each other.  An example of a homogeneous subgraph feature would be the number of triangles in a graph.  An example of a heterogeneous subgraph feature would be a $0$-$1$ indicator that is one when three particular vertices form a triangle and zero otherwise. A collection of heterogeneous triangle features, one for each triple of vertices with all weights equal, are equivalent to a single homogeneous triangle feature with the same weighting.  It is common in social network analysis to weight edge features in a heterogeneous way: in these analyses, vertices typically represent people for whom we might have some prior knowledge or distribution of information that would affect their likelihood of being related. As an extreme example, in sexual contact networks, heterosexual contacts would be expected to be more frequent than homosexual contacts, and individuals who participate in both types of contact might be even less frequent.  In some sociological models, there is a common global constraint on the likelihood of triadic closures, or triangle subgraphs.  To align our proofs with this common model, we will use heterogeneous variable weights on edge features, but use a homogeneous uniform weight on all triangle subgraphs.

\newif\ifcomptheory
\comptheoryfalse
\ifcomptheory
\paragraph{Complexity theory.}
To discuss the difficulty of computing $Z$, we introduce some standard computational complexity background.

\begin{definition}
A \emph{problem} is a functional specification of an output for each input from a set of possible inputs and outputs, independent of the algorithm used to compute it.
\end{definition}

\begin{definition}
A \emph{decision problem} is a problem in which the output set is yes or no.
\end{definition}

\begin{definition}
A \emph{counting problem} is a problem in which the output set is the set of natural numbers.  Computing the number of Hamiltonian cycles in a graph is a counting problem.
\end{definition}

\prob{HAMILTONIAN}, the problem of determining if a graph has a Hamiltonian cycle, is a decision problem.  \prob{\#HAMILTONIAN}, the problem of counting the number of Hamiltonian cycles in a graph, is a counting problem.

\begin{definition}
A \emph{complexity class} is a set of problems that can be solved with a given limit on computational resources.
\end{definition}

\begin{definition}
\cc{P} is the complexity class of decision problems that can be solved in time polynomial in the length of the input.
\end{definition}

\begin{definition}
\cc{NP} is the complexity class of decision problems whose answers have polynomial time verifiable proofs.
\end{definition}

\prob{MATCHING}, the problem of determining if there is a perfect matching in a graph, has a polynomial time algorithm and so is in \cc{P}.  \prob{HAMILTONIAN} is in \cc{NP} because given an ordering of the vertices it can be checked in polynomial time if they form a Hamiltonian cycle in that order.  All problems in \cc{P} are trivially in \cc{NP} using a blank proof and their normal polynomial time algorithm.  The famous \cc{P}~vs.~\cc{NP} problem is to determine if these two complexity classes are the same.

\begin{definition}
\cc{\#P}, pronounced ``sharp p'', is the set of counting problems whose decision version is in \cc{NP}.
\end{definition}

Because \prob{HAMILTONIAN} is in \cc{NP}, \prob{\#HAMILTONIAN} is in \cc{\#P}.  In a sense \cc{NP} is the set of problems that ask about existence of solutions while \cc{\#P} is the set of problems that ask how many solutions a problem has.

\begin{definition}
We say a decision problem, \prob{A}, can be polynomial time reduced to another decision problem, \prob{B}, if there is an algorithm that in polynomial time maps instances of \prob{A} to instances of \prob{B} such that yes instances map to yes instances and no instances to no instances.  For counting problems, \prob{A} is reducible to \prob{B} if there is a pair of polynomial time algorithms such that the first one maps an instance of \prob{A} to an instance of \prob{B} such that the second algorithm applied to the answer to the \prob{B} instance gives the answer for the \prob{A} instance.
\end{definition}

\begin{definition}
A  problem is \cc{C}-hard, for a complexity class \cc{C}, if every problem in \cc{C} can be polynomial time reduced to that problem.
A problem is \cc{C}-complete if it is both in \cc{C} and it is \cc{C}-hard.
\end{definition}

Problems in \cc{P} are generally thought of as being tractable or efficiently solvable whereas problems not in \cc{P} are considered intractable because they take super-polynomial time to solve.  Because most researchers believe that \cc{P} is not equal to \cc{NP}, it is generally thought that \cc{NP}-hard problems do not have polynomial time algorithms.  Similarly \cc{\#P}-hard problems are also thought to take super-polynomial time.  In this paper we show different problems relating to ERGM computation are \cc{NP}-hard and \cc{\#P}-hard.
\fi

\paragraph{A menagerie of computational problems.}
We now introduce several computational problems that we will use throughout the paper:

\begin{problem} \prob{ERGM-PARTITION}

\textsc{Input:} an ERGM $E$

\textsc{Output:} the partition function for $E$
\end{problem}

\begin{problem} \prob{TRI-FREE}

\textsc{Input:} a graph $G=(V,E)$ and a positive integer $k$

\textsc{Output:} yes if there is a triangle free subgraph of $G$ with $k$ or more edges, no otherwise
\end{problem}

\begin{problem} \prob{\#TRI-FREE}

\textsc{Input:} a graph $G=(V,E)$ and a positive integer $k$

\textsc{Output:} the number of triangle-free subgraphs of $G$ with $k$ or more edges
\end{problem}

\begin{problem} \prob{\#MATCHING}

\textsc{Input:} a graph $G=(V,E)$

\textsc{Output:} the number of perfect matchings in $G$
\end{problem}

\begin{problem} \prob{\#3REG-BIP-MATCH}

\textsc{Input:} a $3$-regular bipartite graph $G=(V,E)$

\textsc{Output:} the number of perfect matchings in $G$
\end{problem}

\prob{\#MATCHING} and \prob{\#3REG-BIP-MATCH} are both known to be \cc{\#P}-complete~\cite{Val-TCS-79,DagLub-TCS-92}.   \prob{TRI-FREE} is \cc{NP}-complete~\cite{Yan-SJoC-81}.  We will prove that \prob{\#TRI-FREE} is \cc{\#P}-complete.

\section{Inapproximability}

Here we prove an inapproximability result on computing the partition function of an ERGM by reducing to \prob{TRI-FREE} and creating a gap in the values of the ERGM's partition function that separates the yes-instances from the no-instances.

\begin{definition}
For a given graph $G$, we define $\trifreeergm(G, \alpha)$ to be an ERGM with the following features and weights. We place heterogeneous weights on edges and a homogeneous weight on triangle subgraphs.  If an edge belongs to $G$ we give its indicator feature a weight of $\alpha$ and if it does not belong to $G$ we instead give it a weight of $\beta = -{n \choose 2}\alpha -{n\choose 2} - 1$.  Also, we assign the weight for the count of triangle subgraphs to be $\beta$.
\end{definition}

For two graphs $G$ and $H$ on the same vertex set, let $a$ be the number of edges in $H$ that are also in $G$, $b$ be the number of edges in $H$ that are not in $G$, and $c$ be the number of triangle subgraphs in $H$.  Then $d(H)$ in the distribution defined by $\trifreeergm(G, \alpha)$ is $2^{a\alpha + (b+c)\beta}$.  Also note that $a \leq {n\choose 2}$.

\begin{lemma}\label{lem:ergm-digits}
Fix an integer $\alpha > {n\choose 2}$ and a graph $G$, and let $d_i$ denote the number of triangle free subgraphs of $G$ with $i$ edges. Then the integer part of the partition function for $\trifreeergm(G, \alpha)$ can be rewritten in base $2^\alpha$ as $d_{n\choose 2} \dots d_2d_1d_0$.
\end{lemma}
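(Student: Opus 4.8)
The plan is to split the partition function $Z = \sum_H d(H)$ according to whether or not $H$ is a triangle-free subgraph of $G$, and to show that the triangle-free subgraphs of $G$ account for exactly the integer part of $Z$ while every other graph contributes only a negligible fractional amount. First I would isolate the ``good'' graphs: if $H$ is a triangle-free subgraph of $G$, then in the notation introduced above we have $b = c = 0$, and every edge of $H$ lies in $G$, so $d(H) = 2^{a\alpha} = (2^\alpha)^a$ where $a$ is the number of edges of $H$. Grouping the good graphs by their edge count $i$ and using that there are exactly $d_i$ triangle-free subgraphs of $G$ with $i$ edges, their total contribution is the integer
\[
\sum_{i=0}^{{n\choose 2}} d_i \, (2^\alpha)^i .
\]

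The crux of the argument is to bound the contribution of every other (``bad'') graph $H$, that is, one that has at least one edge outside $G$ or contains at least one triangle, so that $b + c \ge 1$. Here I would combine $a \le {n\choose 2}$ with $\beta < 0$ and $b + c \ge 1$ to obtain
\[
d(H) = 2^{a\alpha + (b+c)\beta} \le 2^{{n\choose 2}\alpha + \beta}.
\]
The entire point of the chosen value $\beta = -{n\choose 2}\alpha - {n\choose 2} - 1$ is that it forces this exponent to equal $-{n\choose 2} - 1$, so each bad graph contributes at most $2^{-{n\choose 2}-1}$. Since there are at most $2^{n\choose 2}$ graphs on the vertex set altogether, the bad graphs together contribute at most $2^{n\choose 2} \cdot 2^{-{n\choose 2}-1} = 1/2 < 1$.

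Combining the two parts, $Z$ equals the integer $\sum_i d_i (2^\alpha)^i$ plus a remainder lying in $[0,1)$, so the integer part of $Z$ is exactly $\sum_i d_i (2^\alpha)^i$. It then remains only to verify that this is a legitimate base-$2^\alpha$ representation, i.e.\ that no digit $d_i$ reaches the base and triggers a carry. This is precisely where the hypothesis $\alpha > {n\choose 2}$ is used: the number of triangle-free subgraphs of $G$ with $i$ edges satisfies $d_i \le 2^{|E(G)|} \le 2^{n\choose 2} < 2^\alpha$, so every $d_i$ is a valid digit and the expansion $d_{{n\choose 2}} \cdots d_1 d_0$ holds as stated.

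The main obstacle I anticipate is getting the bad-graph bound to fall below $1$; everything hinges on the exact arithmetic engineered into $\beta$, chosen so that the largest possible edge gain ${n\choose 2}\alpha$ is overwhelmed by even a single unit of the penalty $\beta$, with enough slack (a factor of $2^{-{n\choose 2}-1}$ per graph) left over to survive summation over all $2^{n\choose 2}$ graphs on the vertex set.
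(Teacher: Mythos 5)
Your proposal is correct and follows essentially the same argument as the paper's proof: isolate the triangle-free subgraphs of $G$ (whose densities sum to the integer $\sum_i d_i 2^{i\alpha}$), bound the total contribution of all other graphs strictly below $1$ using the engineered value of $\beta$, and use $d_i \le 2^{n\choose 2} < 2^\alpha$ to rule out carries. The only cosmetic difference is that you make the per-graph bound $2^{-{n\choose 2}-1}$ and the count $2^{n\choose 2}$ of graphs explicit, where the paper states the same estimate slightly more tersely.
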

\begin{proof}
If $H$ is any graph that has an edge not belonging to $G$ or that has a triangle subgraph, then its density $d(H)$ has a corresponding factor of $2^\beta$, and is thus strictly less than $1/2^{n\choose 2}$.  Therefore the sum total of the densities of all graphs that contain edges not in $G$ or that contain triangles is strictly less than $1$.

The remaining contributions to the partition function come from graphs that are triangle free subgraphs of $G$. Let $H$ be such a graph with $i$ edges; then $H$ contributes one unit to $d_i$ and $d(H) = 2^{i\alpha}$.  Therefore the integer part of $Z$ is $\lfloor Z \rfloor = \sum_i d_i2^{i\alpha}$.  Because each $d_i$ is less than $2^{n \choose 2}$ and $\alpha > {n\choose 2}$, there are no carries in this sum and $Z$ written in base $2^{\alpha}$ is $d_{n\choose 2} \dots d_2d_1d_0$.
\end{proof}

\begin{theorem}\label{thm:ergm-inapprox}
If $\cc{P} \neq \cc{NP}$, then \prob{ERGM-PARTITION} cannot be approximated within a factor of $2^{\poly(n)}$ in polynomial time.
\end{theorem}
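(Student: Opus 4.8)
The plan is to reduce the \cc{NP}-complete problem \prob{TRI-FREE} to approximating the partition function, using \autoref{lem:ergm-digits} to convert the existence of a large triangle-free subgraph into an exponentially large separation in the value of $Z$. Suppose for contradiction that, for some fixed polynomial $q$, there is a polynomial-time algorithm that on any ERGM with $n$ vertices returns a value $\tilde Z$ with $2^{-q(n)}Z \le \tilde Z \le 2^{q(n)}Z$. Given a \prob{TRI-FREE} instance $(G,k)$ on $n$ vertices, I would build the model $\trifreeergm(G,\alpha)$ for a suitably large integer $\alpha$ (fixed below, still of magnitude polynomial in $n$), run the assumed approximation algorithm, and decide the instance by thresholding the returned value.

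The two bounds that drive the reduction both come from \autoref{lem:ergm-digits}. Writing $d_i$ for the number of triangle-free subgraphs of $G$ with exactly $i$ edges, a yes-instance has some $d_i$ with $i \ge k$ nonzero, so the corresponding digit in the base-$2^\alpha$ expansion of $\lfloor Z\rfloor$ forces $Z \ge 2^{k\alpha}$. In a no-instance every $d_i$ with $i \ge k$ vanishes; since each digit is smaller than $2^{n\choose 2}$ and $\alpha > {n\choose 2}$ guarantees no carries, the integer part is a $k$-digit base-$2^\alpha$ numeral whose value is below $2^{(k-1)\alpha + {n\choose 2} + 1}$, and adding the fractional contribution (which the lemma bounds below $1$) keeps $Z < 2^{(k-1)\alpha + {n\choose 2} + 2}$. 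Hence the ratio between the smallest possible yes-value and the largest possible no-value is at least $2^{\alpha - {n\choose 2} - 2}$, a gap I can enlarge arbitrarily by increasing $\alpha$.

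To defeat the assumed $2^{q(n)}$-approximation I would set $\alpha = {n\choose 2} + 2q(n) + 3$, which keeps both weights $\alpha$ and $\beta = -{n\choose 2}\alpha - {n\choose 2} - 1$ polynomial in $n$, so the ERGM has polynomial description size. With this choice the yes-case bound gives $\tilde Z \ge 2^{-q(n)}Z \ge 2^{k\alpha - q(n)}$, while the no-case bound gives $\tilde Z \le 2^{q(n)}Z < 2^{(k-1)\alpha + {n\choose 2} + 2 + q(n)} = 2^{k\alpha - q(n) - 1}$. The two ranges are disjoint, so testing whether $\tilde Z \ge 2^{k\alpha - q(n)}$ decides \prob{TRI-FREE} in polynomial time; as \prob{TRI-FREE} is \cc{NP}-complete, this would yield $\cc{P} = \cc{NP}$, the desired contradiction.

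The substantive step is the gap amplification. By itself \autoref{lem:ergm-digits} only places the leading nonzero digit at the position equal to the size of the maximum triangle-free subgraph, which separates yes- from no-instances merely by a threshold at $2^{k\alpha}$ rather than by any multiplicative factor. The key idea is that enlarging $\alpha$ spreads the digits apart and inflates this threshold into an arbitrarily large multiplicative ratio, so the main care needed is in choosing $\alpha$ large enough to swamp the approximation factor $2^{q(n)}$ while checking that the no-instance estimate — including both the absence of carries and the sub-unit fractional part — stays below $2^{k\alpha}$ by the required margin.
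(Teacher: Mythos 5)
Your proposal is correct and follows essentially the same route as the paper's proof: a reduction from \prob{TRI-FREE} via $\trifreeergm(G,\alpha)$, using \autoref{lem:ergm-digits} to get the yes-case lower bound $2^{k\alpha}$ and a no-case upper bound near $2^{(k-1)\alpha+{n\choose 2}}$, then choosing $\alpha = {n\choose 2} + 2q(n) + O(1)$ so the two ranges remain separated even after multiplying by the approximation factor $2^{\pm q(n)}$. The only differences are bookkeeping constants (your additive $+3$ versus the paper's $+1$, owing to your slightly looser no-case bound), which do not affect correctness.
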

\begin{proof}
We prove this by contrapositive and so suppose we could approximate the partition function of an ERGM within a factor of $f(n)$ in polynomial time where $f(n) = 2^{\poly(n)}$.

Let $G$ be the input to \prob{$k$-TRI-FREE}.

We use $E = \trifreeergm(G, \alpha)$.  If there is a triangle free subgraph with at least $k$ edges, then by \autoref{lem:ergm-digits} $Z > 2^{k\alpha}$.  If there are no triangle free subgraphs with $k$ edges, then by \autoref{lem:ergm-digits} $Z < 2^{{n \choose 2}}2^{(k-1)\alpha}$.  Therefore setting $\alpha = {n \choose 2} + 2\log \left(f(n)\right) + 1$, we have:
\begin{multline*}
f(n)^22^{{n \choose 2} + (k-1)\alpha} = 2^{2\log(f(n)) +  {n \choose 2} + (k-1)\alpha}\\
  = 2^{2\log(f(n)) +  k{n \choose 2} + 2(k-1)\log \left(f(n)\right) + k-1}\\
  = 2^{k\left( {n \choose 2} + 2\log \left(f(n)\right) + 1\right) -1}
  = 2^{k\alpha-1}
  < 2^{k\alpha}
\end{multline*}
So if the computed approximation of $Z$ is greater than $2^{k\alpha}/f(n)$, then $G$ necessarily has a $k$-edge triangle free subgraph. When the approximation of $Z$ is less than $f(n)2^{{n \choose 2} + (k-1)\alpha}$, $G$ has no $k$ edge triangle free subgraph.  Therefore if we could approximate $Z$ within a factor of $2^{\poly(n)}$ in polynomial time, then we could solve \prob{TRI-FREE} in polynomial time implying that $\cc{P} = \cc{NP}$.
\end{proof}

\section{Hardness}

In this section we prove the \cc{\#P}-hardness of computing the ERGM partition function.  We show this by reducing from \prob{\#TRI-FREE}. Unfortunately, the known reduction that shows \prob{TRI-FREE} to be \cc{NP}-hard is not parsimonious~\cite{Yan-SJoC-81}, meaning that the reduction does not preserve solution counts in a consistent way.  So we first need to show that \prob{\#TRI-FREE} is \cc{\#P}-complete, which we do by another reduction from \prob{\#3REG-BIP-MATCH}.

\begin{figure*}[t]
\centering\includegraphics[width=4in]{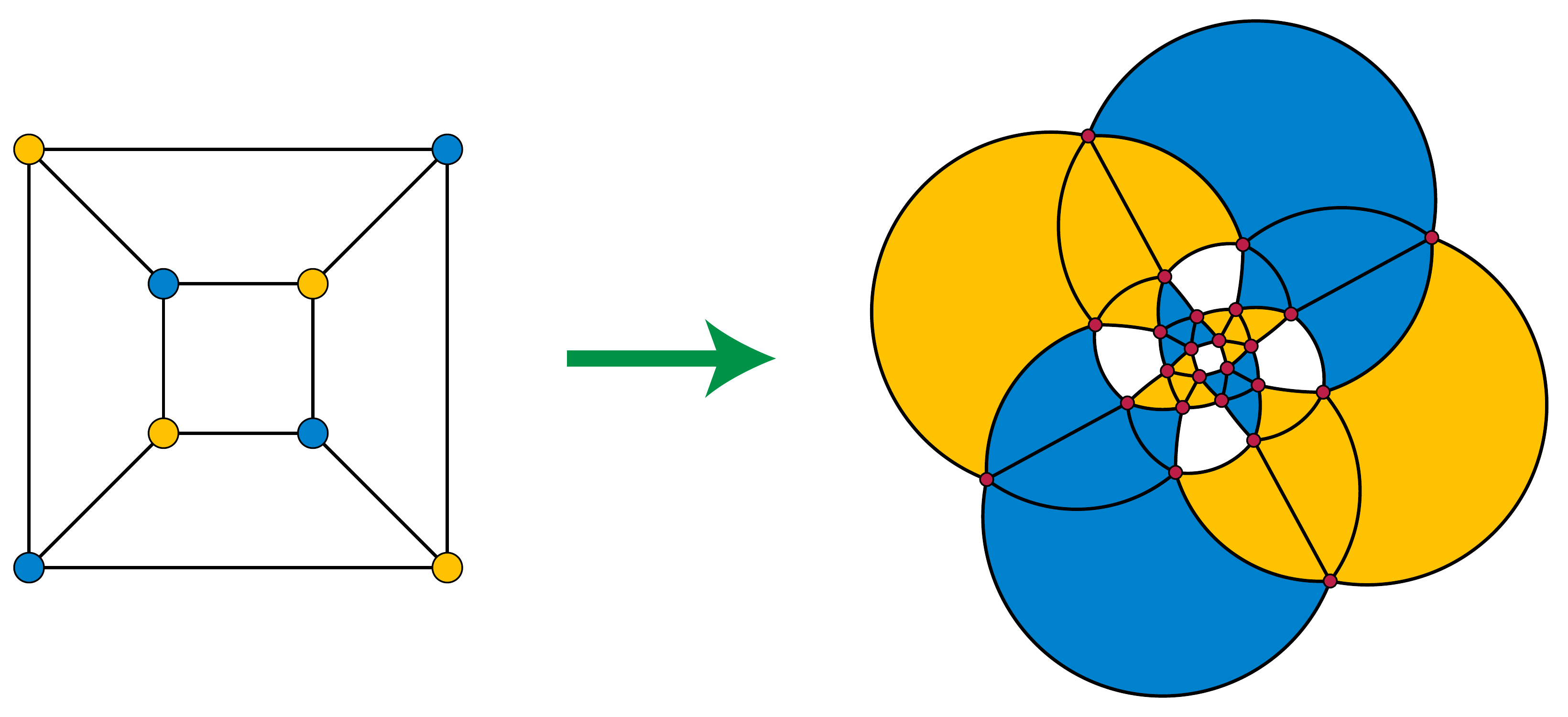}
\caption{Parsimonious reduction from \prob{\#3REG-BIP-MATCH} on a cube to \prob{\#TRI-FREE} on a snub cube}
\label{fig:match2maxtf}
\end{figure*}

\begin{definition}
We define \emph{$\snub(G)$} for a $3$-regular bipartite graph, $G$, to be another graph constructed from $G$ as follows.  For each vertex $v \in V(G)$, create a triangle $t_v$ in $\snub(G)$.  Call the edges of these triangles the vertex triangle edges.  Set an arbitrary cyclic ordering on the vertices of each of these triangles.  Then for each edge $(i,j) \in E(G)$, arbitrarily pick one unpicked vertex $u$ in $t_i$ and another unpicked vertex $w$ in $t_j$.  Add three edges to $\snub(G)$: one edge from $u$ to $w$, one edge from $u$ to the vertex after $w$ in $t_j$'s cyclic ordering, and one edge from $w$ to the vertex after $u$ in $t_i$'s cyclic ordering.  Call the edge from $u$ to $w$ the cross edge for $t_i$ and $t_j$ and call the other two edges connecting edges.
\end{definition}

The name of this construction comes from the snub cube and snub dodecahedron, two convex polyhedra whose graphs can be formed by applying this construction to the graphs of the cube and dodecahedron, respectively.
$\snub (G)$ has three edges for every vertex in $G$ and three more edges for every edge in $G$.  It also has one triangle for every vertex in $G$ and two triangles for every edge in $G$.  If $G$ has $n$ vertices, $\snub(G)$ has $15n/2$ edges and $4n$ triangles.

\autoref{fig:match2maxtf} illustrates the reduction, applied to the graph of a cube. In the figure, the vertex triangle for each cube vertex~$v$ corresponds to the central triangle in one of the patches of four triangles with the same color as~$v$.

\begin{lemma}\label{lem:tri-free-hard}
\prob{\#TRI-FREE} is \cc{\#P}-complete.
\end{lemma}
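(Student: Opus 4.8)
The plan is to establish both halves of \cc{\#P}-completeness. Membership in \cc{\#P} is immediate: a witness for a ``yes'' instance of the decision problem \prob{TRI-FREE} is simply an edge subset, and one can verify in polynomial time both that it induces no triangle and that it has at least $k$ edges; since \prob{TRI-FREE} is in \cc{NP}, its counting version \prob{\#TRI-FREE} lies in \cc{\#P}. For hardness I would give a parsimonious reduction from \prob{\#3REG-BIP-MATCH}. Given a $3$-regular bipartite graph $G$ on $n$ vertices, I map it to the instance $(\snub(G), k)$ where $k = 11n/2$ is the size of a maximum triangle-free (spanning) subgraph of $\snub(G)$; the goal is to show that the number of triangle-free subgraphs of $\snub(G)$ with at least $11n/2$ edges equals the number of perfect matchings of $G$. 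Since $\snub(G)$ is constructible in polynomial time, this reduction together with the known \cc{\#P}-completeness of \prob{\#3REG-BIP-MATCH} will yield the result.

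The first combinatorial step is to pass to the complementary language of edge deletions. An edge subset of $\snub(G)$ is triangle-free exactly when the set of deleted edges forms a \emph{triangle transversal} (a set meeting every triangle), and it has maximum edge count exactly when the deleted set is a minimum transversal; thus maximum triangle-free subgraphs are in bijection with minimum triangle transversals, and it suffices to count the latter. Next I would bound the size of a minimum transversal. By the edge and triangle counts following the definition of $\snub$, the graph has exactly $4n$ triangles, and one checks that every edge lies in at most two of them: a vertex-triangle edge lies in its vertex triangle and in exactly one edge triangle, a cross edge lies in the two edge triangles of its $G$-edge, and a connecting edge lies in a single edge triangle. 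Hence every transversal has at least $4n/2 = 2n$ edges, so a maximum triangle-free subgraph has at most $15n/2 - 2n = 11n/2$ edges.

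The heart of the argument is to show that minimum transversals have size exactly $2n$ and are in bijection with perfect matchings of $G$. A transversal of size $2n$ hitting $4n$ triangles, each of its edges covering at most two, must consist solely of edges hitting exactly two triangles, with the covered pairs partitioning the triangle set. The only two-hitting edges are the vertex-triangle edges and the cross edges, so such a transversal is an exact cover of the $4n$ triangles: equivalently, a perfect matching in the auxiliary graph $\mathcal{H}$ whose nodes are the triangles of $\snub(G)$ and whose edges join the two triangles sharing a two-hitting edge. I would then analyze $\mathcal{H}$ locally. Each edge-triangle node has degree two, since it can be matched either through its cross edge (to the sibling edge triangle of the same $G$-edge) or through a vertex-triangle edge (to its incident vertex triangle), while each vertex-triangle node $t_i$ has degree three, being adjacent to the three ``$t_i$-side'' edge triangles of the edges incident to $i$. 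For each $G$-edge the choice of either its cross edge (Case~A) or both of its vertex-triangle edges (Case~B) is forced, and the constraint that each $t_i$ is matched exactly once forces exactly one incident edge of $i$ into Case~B; consistency at both endpoints then makes the set of Case~B edges a perfect matching of $G$. Conversely any perfect matching of $G$ determines a unique such cover, giving the bijection. Since $G$ is $3$-regular bipartite it has a perfect matching, so transversals of size $2n$ exist and the claimed maximum is attained.

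The step I expect to be the main obstacle is this exact-cover analysis of $\mathcal{H}$: I must argue carefully that a minimum transversal cannot ``waste'' a deletion on a connecting edge (forcing the restriction to two-hitting edges), and that the local Case~A/Case~B dichotomy glues into a global perfect matching of $G$ with no additional multiplicity, so that the reduction is genuinely parsimonious rather than counting-preserving only up to an easily computed factor. Verifying that the degrees and adjacencies in $\mathcal{H}$ are exactly as claimed --- in particular that each vertex-triangle edge of $t_i$ is shared with precisely one edge triangle, and that the three such triangles are exactly the three $t_i$-side triangles --- requires tracing the cyclic-ordering bookkeeping in the $\snub$ construction, which is the only genuinely delicate part.
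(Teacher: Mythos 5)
Your proposal is correct and takes essentially the same route as the paper: the same $\snub(G)$ reduction from \prob{\#3REG-BIP-MATCH}, the same count of $4n$ triangles with cross and vertex-triangle edges lying in two triangles each and connecting edges in one, and the same bijection in which each matched $G$-edge corresponds to deleting its two incident vertex-triangle edges while each unmatched edge corresponds to deleting its cross edge. Your exact-cover/auxiliary-matching-graph formalization is simply a more explicit write-up of the paper's bijection argument (indeed somewhat cleaner, since the paper's prose at one point says ``cross and connecting edges'' where it means cross and vertex-triangle edges).
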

\begin{proof}
First we observe that \prob{\#TRI-FREE} $\in$ \cc{\#P}.

We will prove \cc{\#P}-hardness by reduction from \prob{\#3REG-BIP-MATCH}.

Given $G$, a $3$-regular bipartite graph on $n$ vertices, we reduce it to $\snub (G)$. The cross edges and vertex triangle edges of $\snub(G)$ each participate in two triangles, and the connecting edges each participate in only one triangle. However, the whole graph has $4n$ triangles. So, if a triangle-free subgraph of $\snub(G)$ can be obtained by deleting $2n$ edges from $\snub(G)$, leaving $11n/2$ edges behind, then we must only delete cross edges and vertex triangle edges, destroying two triangles per deleted edge. There can be no triangle-free subgraphs with fewer edge deletions.

Based on this, we can show that the perfect matchings of $G$ are in one-to-one correspondence with the $11n/2$-edge maximum triangle-free subgraphs of $\snub(G)$:
\begin{itemize}
\item Suppose we are given a perfect matching in $G$. If $u$ and $v$ are not matched, delete the cross edge for $t_u$ and $t_v$.  If $u$ and $v$ are matched, then for each of $t_u$ and $t_v$ delete the vertex triangle edge adjacent to both the $(u,v)$ cross and connecting edges.  After these deletions the graph $\snub (G)$ has lost $2n$ edges and is triangle free.
\item Suppose we are given a maximum size triangle free subgraph of $\snub(G)$.  Because there is a perfect matching in every regular bipartite graph, this subgraph must have exactly the size that a subgraph generated as above from a perfect matching would have: $11n/2$ edges in the subgraph after the deletion of $2n$ edges from $\snub(G)$. Therefore, the only edges that could have been deleted are cross and connecting edges.  Observe that if one connecting edge has been deleted then so to must its pair in order to destroy the remaining connecting triangle while continuing to destroy two triangles per deleted edge.  By similar reasoning, at most one edge from each vertex triangle could have been deleted.  Therefore the pairs of connecting edges that have been deleted determine a perfect matching.
\end{itemize}
Thus, this reduction is parsimonious, and \prob{\#TRI-FREE} is \cc{\#P}-complete.
\end{proof}

\begin{theorem}\label{thm:ergm-p-sharp}
Unless \cc{P}${}={}$\cc{\#P}, there can be no polynomial time algorithm for
\prob{ERGM-PARTITION}.
\end{theorem}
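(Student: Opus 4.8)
The plan is to prove $\cc{\#P}$-hardness of \prob{ERGM-PARTITION} by a polynomial-time Turing reduction from \prob{\#TRI-FREE}, which is $\cc{\#P}$-complete by \autoref{lem:tri-free-hard}. A polynomial-time algorithm for \prob{ERGM-PARTITION} would then yield a polynomial-time algorithm for a $\cc{\#P}$-complete problem, collapsing $\cc{\#P}$ into polynomial time, which is the forbidden conclusion $\cc{P}=\cc{\#P}$. The engine of the reduction is \autoref{lem:ergm-digits}, which already shows that the single number $Z$ produced by $\trifreeergm(G,\alpha)$ secretly records the entire sequence $d_0, d_1, \dots, d_{n\choose 2}$ of triangle-free subgraph counts of $G$, indexed by edge count, as the digits of $\lfloor Z\rfloor$ in base $2^\alpha$.

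Given an instance $(G,k)$ of \prob{\#TRI-FREE} on $n$ vertices, I would fix any integer $\alpha > {n\choose 2}$ (for concreteness $\alpha = {n\choose 2}+1$) and construct the model $E = \trifreeergm(G,\alpha)$. This model has only ${n\choose 2}$ heterogeneous edge features and one homogeneous triangle feature, and its weights $\alpha$ and $\beta = -{n\choose 2}\alpha - {n\choose 2} - 1$ have magnitude polynomial in $n$; hence $E$ has polynomial description complexity and can be written down in polynomial time. I would then invoke the hypothesized polynomial-time algorithm for \prob{ERGM-PARTITION} once on $E$ to obtain its partition function $Z$, and finish by decoding $Z$: by \autoref{lem:ergm-digits} the integer part satisfies $\lfloor Z\rfloor = \sum_i d_i 2^{i\alpha}$, and since each $d_i < 2^{n\choose 2} < 2^\alpha$ there are no carries, so reading $\lfloor Z\rfloor$ in base $2^\alpha$ recovers every $d_i$ exactly. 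Extracting these digits is repeated division in polynomial time, after which I would output $\sum_{i\geq k} d_i$, which is precisely the number of triangle-free subgraphs of $G$ with at least $k$ edges, i.e.\ the answer to the \prob{\#TRI-FREE} instance.

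The step I expect to require the most care is not the combinatorics (which \autoref{lem:ergm-digits} has already dispatched) but confirming that the decoding stays polynomial and exact. Because the weights are scaled to base $2$, every density $d(H)$ is a dyadic rational and $Z$ is a finite sum of such, so $Z$ is exactly representable rather than an uncomputable real. I would check that $\lfloor Z\rfloor < 2^{{n\choose 2}\alpha + {n\choose 2}} = 2^{O(n^4)}$, so that $Z$, its integer part, and all intermediate quotients carry only polynomially many bits and the base-$2^\alpha$ digit extraction runs in polynomial time. Granting these bookkeeping points, the reduction is immediate: a polynomial-time \prob{ERGM-PARTITION} algorithm would solve the $\cc{\#P}$-complete problem \prob{\#TRI-FREE} in polynomial time, establishing the theorem by contraposition.
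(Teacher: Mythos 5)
Your proposal is correct and follows essentially the same route as the paper: reduce from \prob{\#TRI-FREE} by building $\trifreeergm(G,\alpha)$ with $\alpha = {n \choose 2}+1$, use \autoref{lem:ergm-digits} to read off the counts $d_i$ as base-$2^\alpha$ digits of $\lfloor Z \rfloor$, and sum the digits from $d_k$ upward. The paper's proof is terser and omits the bookkeeping you spell out (polynomial description size, polynomially many bits in $Z$, polynomial-time digit extraction), but these are exactly the implicit details behind its final sentence, not a different argument.
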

\begin{proof}
Again we take an instance of \prob{\#TRI-FREE} and use the $\trifreeergm (G,\alpha)$ with $\alpha = {n \choose 2} + 1$ this time.  By \autoref{lem:ergm-digits}, if $d_i$ is the number of triangle free subgraphs with $i$ edges, then the integer part of $Z$ can be written in base $2^\alpha$ as $d_{n \choose 2}\dots d_2d_1d_0$.  The sum of the digits from $d_k$ to $d_{n\choose 2}$ is equal to the number of triangle free subgraphs with greater than or equal to $k$ edges. Thus, if we could compute $Z$ we could determine this number in polynomial time, implying that \cc{P}${}={}$\cc{\#P}.
\end{proof}

\begin{theorem}\label{thm:ergm-drawing}
Unless $\cc{RP}=\cc{NP}$, there is no randomized polynomial algorithm for generating random graphs whose output distribution is within total variation distance $1-o(1)$ of a given ERGM with variable weights.
\end{theorem}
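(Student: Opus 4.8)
The plan is to argue by contraposition: assuming a randomized polynomial-time sampler $M$ whose output distribution $Q$ is within total variation distance $1-\delta$ of $\trifreeergm(G,\alpha)$, I will build a randomized polynomial-time algorithm that decides \prob{TRI-FREE}, which is \cc{NP}-complete, thereby placing \cc{NP} inside \cc{RP} and forcing $\cc{RP}=\cc{NP}$. Since $\trifreeergm$ places heterogeneous (variable) weights on edges, such a sampler is exactly the ``ERGM with variable weights'' object of the statement. The whole reduction rides on one observation extracted from \autoref{lem:ergm-digits}: for large $\alpha$ the distribution $\trifreeergm(G,\alpha)$ concentrates essentially all of its mass on the triangle-free subgraphs of $G$ having the maximum number of edges.

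First I would pin down this concentration quantitatively. Write $P$ for the true $\trifreeergm(G,\alpha)$ distribution and $m$ for the maximum number of edges in a triangle-free subgraph of $G$. The proof of \autoref{lem:ergm-digits} already shows that the graphs carrying an out-of-$G$ edge or a triangle contribute total density below $1$, while a single maximum triangle-free subgraph contributes density $d_m 2^{m\alpha}\ge 2^{m\alpha}$. Bounding the lower-order triangle-free terms $d_i 2^{i\alpha}$ with $i<m$ by $\binom{n}{2}\,2^{\binom{n}{2}}\,2^{(m-1)\alpha}$ (using $d_i<2^{\binom{n}{2}}$ from the lemma), I obtain that the event $B = \{$the sampled graph is a triangle-free subgraph of $G$ with at least $k$ edges$\}$ satisfies $P(B)\ge 1-2^{-\Omega(\alpha-\binom{n}{2})}$ whenever $m\ge k$, i.e.\ on a yes-instance. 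The decisive asymmetry is that on a no-instance ($m<k$) there is no triangle-free subgraph of $G$ with $k$ edges at all, so $\Pr[B]=0$ under \emph{every} distribution on graphs, the sampler's $Q$ included.

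The decision procedure then draws a graph from $M$ and accepts exactly when the output lies in $B$, which is testable in polynomial time: check that every edge lies in $G$, that there are at least $k$ of them, and that no triple of vertices forms a triangle. On a no-instance the procedure never accepts, supplying the one-sided error that \cc{RP} demands. On a yes-instance, $Q(B)\ge P(B)-d_{\mathrm{TV}}(Q,P)\ge \bigl(1-2^{-\Omega(\alpha-\binom{n}{2})}\bigr)-(1-\delta)=\delta-2^{-\Omega(\alpha-\binom{n}{2})}$, so a single draw accepts with probability close to $\delta$. Repeating the sampler and accepting if any draw lands in $B$ boosts this acceptance probability to a constant.

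The one delicate point, and the step I expect to require the most care, is balancing the three quantities $\alpha$, the concentration error, and the overlap $\delta=1-d_{\mathrm{TV}}$. I would fix $\alpha=\poly(n)$, large enough to keep the concentration error $2^{-\Omega(\alpha-\binom{n}{2})}$ far below $\delta$ so that $\delta - 2^{-\Omega(\alpha-\binom{n}{2})}\ge \delta/2$ remains a genuine signal, while keeping the ERGM description size polynomial as required. The number of repetitions needed to amplify a per-draw success of order $\delta$ up to a constant is $O(1/\delta)$, which is polynomial precisely when the slack $\delta$ from total variation distance $1$ is at least inverse-polynomial; this is the operative content of the hypothesis of distance $1-o(1)$. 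Thus even a sampler permitted to share only a vanishing, but non-negligible, fraction of its mass with the target ERGM would let us decide \prob{TRI-FREE} in \cc{RP}, and no such sampler can exist unless $\cc{RP}=\cc{NP}$.
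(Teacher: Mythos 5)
Your proposal is correct and follows essentially the same route as the paper: build $\trifreeergm(G,\alpha)$ from a \prob{TRI-FREE} instance, draw from the hypothesized sampler, and accept iff the output is a triangle-free subgraph of $G$ with at least $k$ edges, which gives one-sided error because on no-instances no accepting graph exists at all. Your version is in fact more careful than the paper's (which fixes $\alpha = \binom{n}{2}+1$, asserts a $1-o(1)$ mass fraction on the valid subgraphs, and leaves the total-variation and amplification bookkeeping implicit); your larger polynomial choice of $\alpha$, the explicit bound $Q(B) \ge \delta - 2^{-\Omega(\alpha - \binom{n}{2})}$, and the $O(1/\delta)$-fold repetition make precise exactly how the overlap $\delta$ must enter, including the requirement that it be at least inverse-polynomial.
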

\begin{proof}
Suppose there was such an algorithm, $A$.  Then for some $G$ and $k$, an instance of \prob{TRI-FREE}, we can construct the corresponding $\trifreeergm(G,\alpha)$ with $\alpha = {n \choose 2}+1$.  Then running $A$ on this ERGM, we can verify whether or not the outputted graph is a triangle free subgraph with at least $k$ edges, output $1$ if the verification succeeds, or output $0$ if it fails.  If $G$ has no $k$-edge triangle-free subgraph then this procedure will output one with $o(1)$ probability.  On the other hand if there is such a subgraph, then the valid subgraphs have a $1-o(1)$ fraction of the ERGM's probability and with high probability $A$ will output one of them.  Thus if there is such an algorithm to generate random graphs then we can use it to show that $\cc{RP}=\cc{NP}$.
\end{proof}

\section{Dichotomy}

In this section we prove a dichotomy theorem that describes the hardness of computing ERGM partition functions when the set of features only consists of subgraph indicators. Suppose we are given a set $S$ of graphs, isomorphic copies of which are to be used as (heterogeneous) features of an ERGM. Then, as we show, if all graphs in $S$ have at most two vertices, then the partition function $Z$ of all ERGMs using $S$ as features can be computed in polynomial time. However, if $S$ contains a graph with three or more vertices, then it is $\cc{\#P}$-hard to compute $Z$ for ERGMs using features from $S$. This demonstrates that our results on the hardness of ERGMs are not specific to triangle features: the hardness results that we have proven using triangles cannot be avoided by replacing the triangles by a more clever choice of complex subgraph features.

We prove this result using a feature replacement strategy, in which we use a combination of heterogeneous weights on larger subgraphs to simulate weights on smaller subgraphs.
In this way, hardness results for all larger features follow from basic hardness results for two three-vertex features: triangles (the main feature for our previous hardness results) or three-vertex paths.

For this section we restrict the features for the ERGMs to indicator and count functions for specific subgraphs from a set $S$.  Without loss of generality we can assume that the ERGMs under consideration include a feature for each subgraph isomorphic to a graph in $S$, as missing features can be handled by giving them weight~$0$.

\begin{problem} \prob{$S$-ERGM-PARTITION} for a given set $S$ of graphs

\textsc{Input:} an ERGM, $E$, whose features are subgraph indicators and counts of graphs in $S$

\textsc{Output:} the partition function, $Z$, for $E$
\end{problem}

\begin{lemma}\label{lem:ergm-two-easy}
Let $S$ be a set of graphs to be used to define the features for an ERGM.  If $S$ contains only graphs on two or fewer vertices, then \prob{$S$-ERGM-PARTITION} can be solved in polynomial time.
\end{lemma}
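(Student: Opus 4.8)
The plan is to exploit the fact that every feature arising from a graph on at most two vertices is \emph{local} to a single potential edge, so that the density $d(G)$ factors as a product of independent per-pair contributions; the partition function then factors as a product of $\binom{n}{2}$ cheaply computable terms, none of which couples two distinct pairs of vertices.

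First I would classify the available features. A graph on zero or one vertices yields only a constant contribution (the empty graph is present in every $G$, and the single-vertex count is always $n$), so any such feature multiplies every density by the same factor $2^{c}$ and can be pulled out of the partition function as a global constant $C$. A graph on exactly two vertices is either $K_2$ or its complement $\overline{K_2}$, so a heterogeneous indicator for a specified pair $\{u,v\}$ reports only whether the edge $uv$ is present or absent in $G$, and a homogeneous count of such subgraphs equals the sum over all pairs of the corresponding per-pair indicator. In every case the exponent $\sum_{f} f(G)w_f$ decomposes into a global constant plus a sum of terms each depending on the state of a single pair. Writing $x_e(G)\in\{0,1\}$ for the presence of the edge $e=\{u,v\}$ in $G$, and letting $g_e(x_e)$ be the product of the $2^{w_f}$ over all features $f$ tied to the pair $e$ that are active in state $x_e$, this gives the factorization
\[
d(G) \;=\; C \prod_{e} g_e\bigl(x_e(G)\bigr).
\]

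Next, because each potential edge may be present or absent independently of the others, summing over all graphs on the fixed vertex set distributes across the product:
\[
Z \;=\; \sum_G d(G) \;=\; C \prod_{e}\bigl(g_e(0)+g_e(1)\bigr).
\]
It then remains to check polynomial-time computability. There are only $\binom{n}{2}$ factors, and each factor $g_e(0)+g_e(1)$ is a sum of two terms of the form $2^{(\text{sum of relevant weights})}$. Since the model's description complexity---in particular the magnitude of the weights---is polynomial in $n$, each exponent, and hence each factor, has polynomially many bits (and is computed exactly when the weights are integers, or to any prescribed precision otherwise). A product of polynomially many numbers of polynomial bit-length again has polynomial bit-length, so $Z$ is obtained in polynomial time with big-integer (or big-rational) arithmetic.

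The main obstacle I anticipate is not any difficult computation but the bookkeeping in the classification step: one must verify that the enumeration of two-vertex features is exhaustive, that induced-subgraph indicators and (possibly non-induced) subgraph counts for both $K_2$ and $\overline{K_2}$ are all routed to the correct per-pair factor, and that the constant features are separated cleanly---so that the claimed identity $d(G)=C\prod_e g_e(x_e)$ holds with no cross terms linking two distinct pairs. Once this independence across pairs is established, the distributive step and the bit-length bound are routine.
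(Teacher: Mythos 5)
Your proposal is correct and follows essentially the same route as the paper: both arguments observe that every feature on at most two vertices is local to a single vertex pair, so the partition function factors as a product over the $\binom{n}{2}$ pairs of a two-term sum (present/absent contribution), which is computable in polynomial time. The paper phrases this via the normalization $Y=\prod_{(i,j)}2^{\overline{x}_{(i,j)}}$ and the identity $Z/Y=\prod_{(i,j)}\bigl(1+2^{x'_{(i,j)}}\bigr)$, which is exactly your $Z=C\prod_e\bigl(g_e(0)+g_e(1)\bigr)$ in slightly different notation; your explicit handling of zero- and one-vertex features and of bit-length bounds is a harmless elaboration of details the paper treats as trivial.
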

\begin{proof}
The only nontrivial graphs that $S$ can contain are $K_2$ (a single edge) and its complement $\overline{K}_2$ (the empty graph on two vertices).  Without loss of generality we assume that $S$ contains both of these graphs.  Given an ERGM $E$, and a potential edge, $(i,j)$, we define the $x_{(i,j)}$ to be the sum of two terms: the weight of the indicator function that tests whether $i$ and $j$ induce a $K_2$ subgraph, and the weight of the count function for $K_2$.
Symmetrically, we define $\overline{x}_{(i,j)}$ to be the sum of two terms: the weight of the indicator function that tests whether $i$ and $j$ include a $\overline{K}_2$ subgraph, plus the weight of the count function for $\overline{K}_2$.  Then
\[
Z = \sum_G \prod_{e \in G} 2^{x_{e}} \prod_{e \notin G}2^{\overline{x}_{e}}.
\]
To compute this value, first compute $Y = \prod_{(i,j)} 2^{\overline{x}_{(i,j)}}$ and, for each edge $e = (i,j)$, define $x'_{e} = x_{e} - \overline{x}_{e}$.  Then
\[
Z/Y = \sum_G \prod_{e \in G} 2^{x'_{e}} = \prod_{(i,j)} \left(1 + 2^{x'_{(i,j)}}\right).
\]
Thus if $S$ contains only graphs on two vertices, we can compute $Z$ in polynomial time by computing $Y$ and $Z/Y$.
\end{proof}

\begin{definition}
For a graph $H$ with $k$ vertices, $H'$ an induced subgraph of $H$ with $k'$ vertices, an ERGM $E$, and a weight $\gamma$, we define the \emph{feature replacement} of $H'$ with $H$ in $E$ to be a new ERGM, defined as follows.  The vertex set of the new ERGM will include all the vertices of $E$ together with some new vertices added in the construction. If $E$ includes features that are not isomorphic to $H'$, these same features continue to exist in the new ERGM. For each indicator feature of a subgraph in $E$ that is isomorphic to $H'$, we perform the following steps:

\begin{enumerate}[Step 1:]
\item Add $k$ additional vertices to the new ERGM and pair $k'$ of them up with the $k'$ vertices of the subgraph of type $H'$.
\item Label the $k$ new vertices of the ERGM $l_1$, $l_2$, $\dots$, and $l_k$ and the $k'$ original vertices $l'_1$, $l'_2$, $\dots$, and $l'_{k'}$ such that the pair of $l'_i$ is $l_i$. Label the vertices of $H$ in the same way.
\item Add a new indicator feature, with weight $2\gamma$, that matches subgraphs isomorphic to $H$ that are induced by the set of vertices $\{l_1, l_2, \dots l_k\}$ with the numbering of these vertices matching the numbering of the isomorphic copy of~$H$. Here $\gamma$ is a parameter to be specified later.
\item For each $i$ from $1$ to $k'$, add another indicator feature, with weight $2\gamma$, 
matching subgraphs isomorphic to $H$ that are induced by the set of vertices 
$\{l_1, l_2, \dots, l_{i-1}, l'_i, l_{i+1}, \dots l_k\}$.
\item Add one more indicator feature, with the same weight as indicator feature from $E$, 
matching subgraphs isomorphic to $H$ that are induced by the set of vertices
$\{l'_1,l'_2,\dots l'_{k'},l_{k'+1},\dots l_k\}$, i.e. with all the first $k'$ vertices swapped out for their pair in the original vertices.
\end{enumerate}

To handle features that count the number of subgraphs isomorphic to $H'$, we run the same process for each of the $k'!{n\choose k'}$ possible induced subgraphs of this type.
\end{definition}

\autoref{fig:featurereplacement} shows this feature replacement process for replacing $K_3$ with a wheel graph on $6$ vertices.

\begin{figure*}[t]
\centering\includegraphics[width=0.85\textwidth]{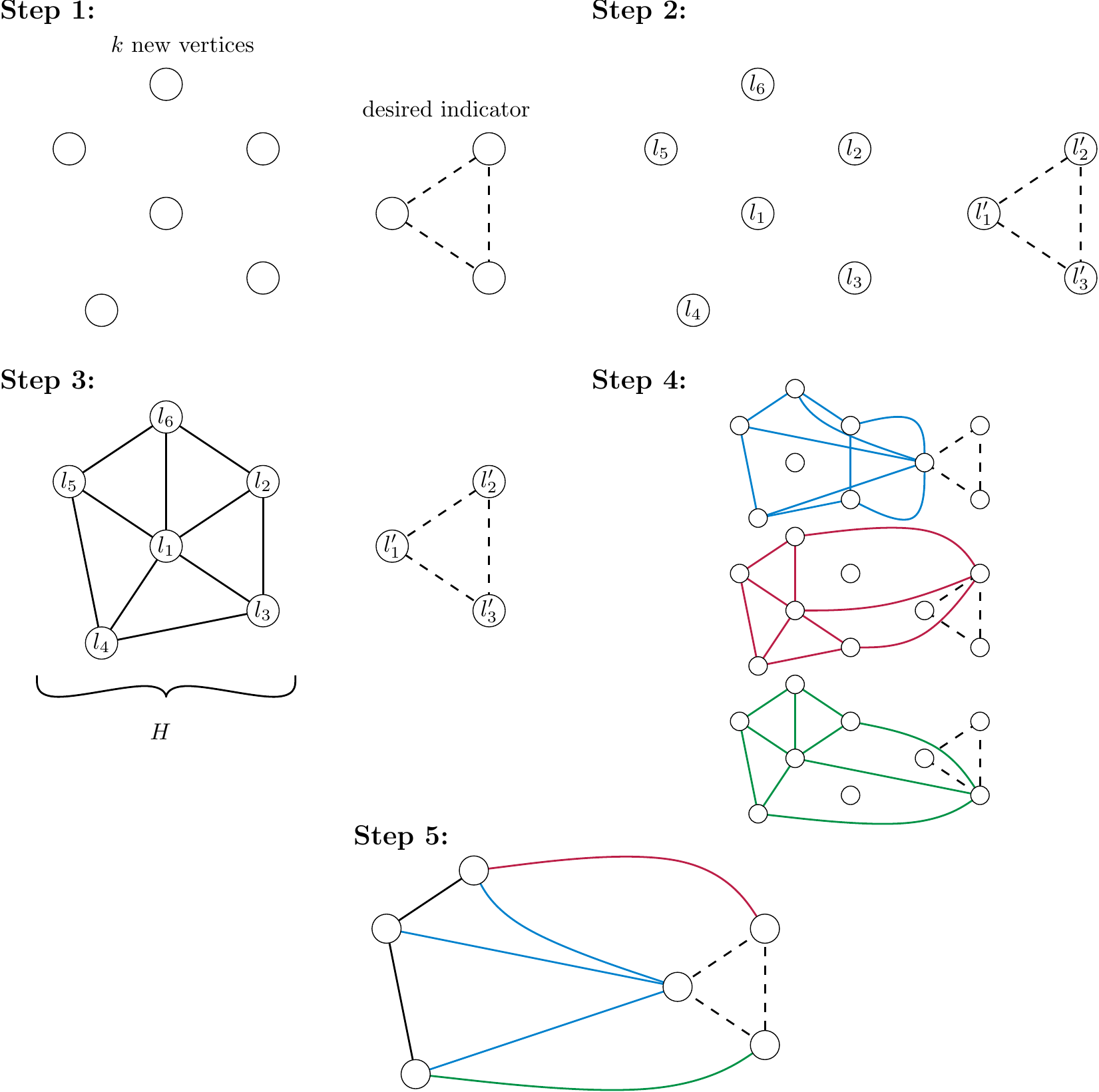}
\caption{An example of feature replacement replacing a triangle indicator with six new vertices and five wheel graph indicators.}
\label{fig:featurereplacement}
\end{figure*}

\begin{lemma}\label{lem:feature-replace}
Given two graphs $H$ and $H'$, where $H'$ is an induced subgraph of $H$, the partition function of an ERGM that has only integer weights and uses $H'$ can be computed from the partition function of an ERGM that uses $H$ instead of $H'$.
\end{lemma}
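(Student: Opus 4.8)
The plan is to reduce the claim to the analysis of a \emph{single} indicator replacement and then to recover $Z_E$, the partition function of the ERGM $E$ that uses $H'$, from evaluations of $Z_{\mathrm{new}}(\gamma)$, the partition function of the replaced ERGM that uses $H$, for several integer values of the parameter $\gamma$. Fix one indicator feature of $E$ for a copy of $H'$ on original vertices $l'_1,\dots,l'_{k'}$, of weight $w$. Writing a graph of the new ERGM as a pair $(G,\sigma)$, where $G$ is its restriction to the original vertices and $\sigma$ is the configuration of the edges incident to the $k$ new vertices $l_1,\dots,l_k$, I would observe that every feature of $E$ other than the replaced one depends only on $G$, while the $k'+2$ new features depend on $\sigma$ and, through the Step~5 feature, on $G$. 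Since the only feature weights that involve $\gamma$ are the $2\gamma$ weights on the Step~3 and Step~4 features, grouping terms by the number $m$ of these features that fire shows that $Z_{\mathrm{new}}$ is a polynomial in $t=2^{2\gamma}$ of degree at most $k'+1$, and I would aim to show that its leading coefficient is a computable constant times $Z_E$.

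The combinatorial heart is to determine exactly when the new features fire. There are $k'+1$ features of weight $2\gamma$: the Step~3 feature and the $k'$ Step~4 features. I would argue that at most $k'+1$ of them fire in any configuration $\sigma$, and that equality forces a unique configuration of the \emph{constrained} gadget edges: the edges among $l_1,\dots,l_k$ must reproduce $H$, and for each $i\le k'$ and each $j\ne i$ the edge $l'_il_j$ must be present exactly when $i$ and $j$ are adjacent in $H$, so that $l'_i$ ``mirrors'' $l_i$. These constraints touch disjoint edge sets for different $i$ and never involve an edge among $l'_1,\dots,l'_{k'}$, so they are mutually consistent and uniquely determined. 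In this mirror configuration the Step~5 feature, whose vertex set is $\{l'_1,\dots,l'_{k'},l_{k'+1},\dots,l_k\}$, has all of its edges fixed to agree with $H$ except for the edges among $l'_1,\dots,l'_{k'}$; hence it fires precisely when those original vertices induce $H'$, i.e.\ exactly when the replaced feature of $E$ would have fired.

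It then remains to read off the leading coefficient. Let $\chi(G)\in\{0,1\}$ indicate whether $l'_1,\dots,l'_{k'}$ induce $H'$ in $G$, and let $d_E^{-H'}(G)$ denote the density contributed by all features of $E$ except the replaced one, so that $d_E^{-H'}(G)\,2^{w\chi(G)}=d_E(G)$. The gadget edges not touched by any feature---the pairing edges $l'_il_i$ and all edges from new vertices to original vertices outside $\{l'_1,\dots,l'_{k'}\}$---are free, and there are $F=k'+k(n-k')$ of them regardless of $G$. Summing the degree-$(k'+1)$ contributions over these free edges and over $G$ gives
\[
[t^{k'+1}]\,Z_{\mathrm{new}}
= 2^{F}\sum_{G} d_E^{-H'}(G)\,2^{w\chi(G)}
= 2^{F}\sum_{G} d_E(G)
= 2^{F} Z_E,
\]
so $Z_E = 2^{-F}\,[t^{k'+1}]Z_{\mathrm{new}}$. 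Since $Z_{\mathrm{new}}(\gamma)$ is a degree-$(k'+1)$ polynomial in $t=2^{2\gamma}$, I would compute it at the $k'+2$ integer points $\gamma=0,1,\dots,k'+1$ (all giving integer weights and distinct values of $t$), interpolate, and extract the leading coefficient. For a \emph{count} feature, which the construction replaces copy by copy using $p=k'!\binom{n}{k'}=\poly(n)$ disjoint gadgets, the same argument makes $Z_{\mathrm{new}}$ a polynomial of degree $p(k'+1)=\poly(n)$ in $t$ whose leading coefficient is $2^{\sum_m F_m}Z_E$, recoverable by interpolation over polynomially many points.

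The step I expect to be the main obstacle is the firing analysis of the second paragraph: proving that $k'+1$ is the exact maximum number of weight-$2\gamma$ features that can fire, that this maximum is attained only in the single mirror configuration of the constrained edges, and that in that configuration the Step~5 feature detects $H'$ on the original vertices while the free-edge multiplicity $2^{F}$ is independent of $G$. This requires careful bookkeeping of which gadget edges appear in which of the $k'+2$ features and a verification that the mirror constraints are consistent and complete. A secondary point to check is that the newly added vertices do not perturb the other features of $E$; this is automatic for heterogeneous indicators on fixed vertex sets, and any homogeneous count features can first be rewritten as collections of heterogeneous indicators, as noted in the preliminaries.
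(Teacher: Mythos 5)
Your proposal is correct, and the combinatorial core---the ``mirror'' analysis showing that when the Step~3 and Step~4 features all fire, the gadget edges are forced uniquely and the Step~5 feature then fires exactly when the original $H'$ feature would have---is the same analysis the paper relies on. Where you genuinely diverge is in the recovery mechanism. The paper makes a single oracle call: it sets $\gamma = w^+ + w^-$ (the total positive and negative weight mass of the original ERGM), so that every state in which all weight-$2\gamma$ indicators fire contributes a factor $2^{2s\gamma}$ while every other state falls short by at least one such factor; the old partition function is then read off from the high-order binary digits of the new one, a place-value separation argument (this is what \autoref{fig:baseshift} depicts). You instead observe that $Z_{\mathrm{new}}$ is a polynomial of degree at most $k'+1$ in $t = 2^{2\gamma}$ with $\gamma$-independent coefficients, and recover the leading coefficient $2^F Z_E$ by interpolating from $k'+2$ evaluations at small integer $\gamma$. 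Your route buys three things: the auxiliary weights stay $O(k')$ rather than growing with the weight mass of the instance; you avoid the magnitude/carry bookkeeping that the digit-separation argument needs (and which the paper treats somewhat loosely); and you make explicit the free-edge multiplicity $2^F$, which the paper's proof and figure gloss over (the leading digits actually encode $2^F Z_{\mathrm{old}}$, not $Z_{\mathrm{old}}$, though this constant is easily divided out in either approach). The cost is that you need polynomially many partition-function evaluations rather than one, so you prove a Turing-reduction version of the statement rather than the literal ``computed from \emph{the} partition function of \emph{an} ERGM''; this is formally weaker but entirely sufficient for the lemma's only application, the $\cc{\#P}$-hardness argument in \autoref{lem:ergm-three-hard}, where even the nested double replacement still requires only polynomially many oracle calls.
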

\begin{proof}
First let $w^+$ be the sum of all positive weights in the ERGM that uses $H'$ and $w^-$ be the absolute value of the sum of all negative weights.  We know the digits of $Z$ for this ERGM are within $w^-$ digits of the right of the decimal point and $w^+$ digits to the left of the decimal point.

Use feature replacement to replace $H'$ with $H$ using a weight of $\gamma = \left(w^+ + w^-\right)$.

The ERGM obtained from feature replacement has polynomially many new vertices and indicator features. In this ERGM, there exist states in which all of the indicator features with weight $2\gamma$ are true; let $s$ be the number of these indicator features. Then for each of these states, the density of the graph will include a factor of $2^{2s\gamma}$ for those features, while all the other states will omit at least one factor of $2^{2s\gamma}$. Thus, by looking at the binary digits of the partition function extending from the $2^{2s\gamma}$ bit upwards (as seen in \autoref{fig:baseshift}), we can recover the subset of the partition function generated only by the states in which these indicator features are all true.
For these states, the remaining terms in the weight of each state coincide with the corresponding terms in the weights of the states of the original ERGM.
\end{proof}

\begin{figure*}[t]
\centering\includegraphics[width=0.95\textwidth]{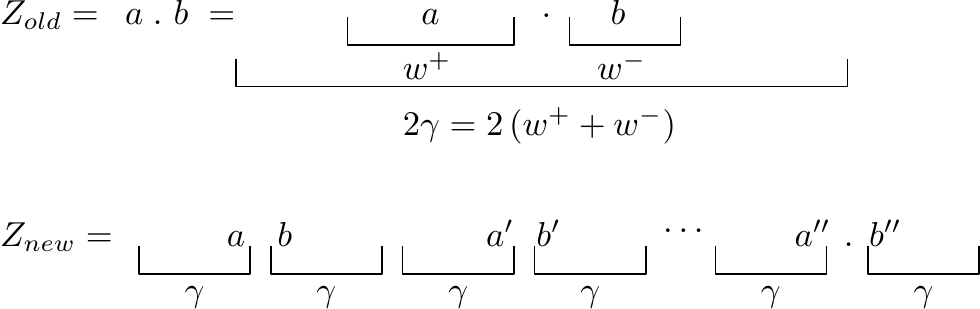}
\caption{$Z_{old}$ is found in the leading two digits of $Z_{new}$ in base $2^\gamma$.}
\label{fig:baseshift}
\end{figure*}

\begin{lemma}\label{lem:ergm-three-hard}
Let $S$ be a set of graphs containing any graph $H$ on three or more vertices. Then \prob{$S$-ERGM-PARTITION} is \cc{\#P}-hard.
\end{lemma}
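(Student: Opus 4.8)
The plan is to transport the \cc{\#P}-hardness we have already established onto $H$-shaped features using feature replacement (\autoref{lem:feature-replace}). I would isolate two ``atomic'' three-vertex cases, the triangle $K_3$ and the three-vertex path $P_3$, show that the special cases $S=\{K_3\}$ and $S=\{P_3\}$ of \prob{$S$-ERGM-PARTITION} are each \cc{\#P}-hard, and then lift whichever one applies up to $H$.

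For the triangle case I would reuse the construction of \autoref{thm:ergm-p-sharp}: the ERGM it produces has only heterogeneous edge indicators ($K_2$) together with a homogeneous triangle count ($K_3$), and all of its weights are integers. Since $K_2$ is an induced subgraph of $K_3$, I would apply feature replacement to each edge indicator, turning it into $K_3$ features; the triangle-count feature is already of type $K_3$, so the resulting ERGM uses only $K_3$ features. By \autoref{lem:feature-replace} its partition function determines the original partition function, so the case $S=\{K_3\}$ is \cc{\#P}-hard.

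To glue the atomic cases to an arbitrary $H$, I would use two observations. First, any three vertices of $H$ induce one of the four graphs $K_3,\overline{K_3},P_3,\overline{P_3}$, so either $H$ or its complement $\overline{H}$ contains an induced $K_3$ or an induced $P_3$. Second, the partition functions of the cases $S=\{H\}$ and $S=\{\overline{H}\}$ are polynomially interreducible through the complementation bijection $G\mapsto\overline{G}$ on the $2^{\binom{n}{2}}$ graphs, because the indicator that $G[t]\cong H$ equals the indicator that $\overline{G}[t]\cong\overline{H}$ and the weights can simply be copied across. Hence, after possibly replacing $H$ by $\overline{H}$, I may assume $H$ contains an induced copy $H'$ of $K_3$ or of $P_3$; feature replacement of $H'$ by $H$ then reduces the corresponding atomic hard problem to the case $S=\{H\}$. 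Finally, when $S$ properly contains $H$, any $\{H\}$-ERGM is realized as an $S$-ERGM by giving the extra features weight $0$, so \prob{$S$-ERGM-PARTITION} is at least as hard.

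The main obstacle is the $P_3$ base case, which cannot piggyback on \autoref{thm:ergm-p-sharp}: feature replacement into $P_3$ could only begin from features that are induced subgraphs of $P_3$, and these all have at most two vertices, hence are already easy by \autoref{lem:ergm-two-easy}. So I must give a direct reduction, and the natural source is \prob{\#3REG-BIP-MATCH}, whose input is bipartite and therefore triangle-free. The guiding idea is that in a triangle-free host a subgraph contains no induced $P_3$ precisely when it has maximum degree one, i.e.\ is a matching, so a large negative weight on induced paths should concentrate the partition function on matchings, whose counts by size can then be read off the base-$2^\gamma$ digits of $Z$ as in \autoref{lem:ergm-digits}. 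The delicate point is that a single $P_3$ indicator constrains three potential edges jointly and cannot by itself reward a host edge or forbid a non-host edge; restricting the model's graph to subgraphs of $G$ and weighting it by edge count will therefore require an auxiliary forcing gadget, and designing that gadget so that it pins down the intended edges without creating spurious induced paths is the heart of the argument.
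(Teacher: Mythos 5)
Your overall architecture (complementation to get down to the $K_3$ and three-vertex-path cases, feature replacement to lift a three-vertex feature up to $H$, weight-$0$ padding for the other members of $S$) matches the paper, and your $K_3$ base case is complete and correct. But the path base case --- which you rightly note is unavoidable, since $H$ may contain neither $K_3$ nor $\overline{K_3}$ as an induced subgraph --- is left unproven: you end by saying an ``auxiliary forcing gadget'' must be designed and that this is ``the heart of the argument,'' without designing it. That is a genuine gap; the lemma is not established without it.

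Moreover, the obstacle you cite to justify that missing gadget rests on a misreading of how \autoref{lem:feature-replace} is applied. You argue that feature replacement into the path is useless because the only proper induced subgraphs of the path have at most two vertices, and ERGMs built solely from such features are easy by \autoref{lem:ergm-two-easy}. But feature replacement does not require the source ERGM to consist entirely of copies of $H'$: it replaces the $H'$-features and leaves every other feature intact, so hardness can be transported from a \emph{mixed} ERGM. That is exactly what the paper does. It reduces from counting perfect matchings: given a bipartite $G$, it builds an ERGM with $K_2$ indicators of weight ${n \choose 2}$ on host edges, and weight $\beta = -{n \choose 2}^2 - {n \choose 2} - 1$ on non-host edges and on two-edge paths inside $G$. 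Since $G$ is bipartite (your induced-versus-non-induced concern is handled by triangle-freeness, as you anticipated), the dominant states are precisely the matchings of $G$, and as in \autoref{lem:ergm-digits} the base-$2^{n \choose 2}$ digits of $Z$ list the number of $i$-edge matchings, so this mixed ERGM is already \cc{\#P}-hard. Then, because $K_2$ is an induced subgraph of the path (and of $H$), applications of \autoref{lem:feature-replace} eliminate the $K_2$ features --- and the path features --- in favor of $H$-features. In other words, the forcing gadget you were looking for \emph{is} \autoref{lem:feature-replace}: its weight-$2\gamma$ indicators pin down the edges among the new vertices in the dominant digit range, so the one remaining indicator acts as an indicator of the single original edge in the high-order digits of $Z$. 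With that observation your plan closes into the paper's proof.
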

\begin{proof}
$H$ must contain at least one of $K_3$, $P_2$, $\overline{P_2}$, or $\overline{K_3}$ as an induced subgraph; we handle each case separately. The cases of $\overline{P_2}$ or $\overline{K_3}$ can be transformed into the cases of $P_2$ or $K_3$ by complementing all of the features used in the ERGM (keeping the weights the same), which produces an ERGM whose probability on any graph is the same as the probability of the original ERGM on the complementary graph. In particular, this transformation leaves the partition function unchanged. Thus, we need only consider the cases of $K_3$ and $P_2$.

If $H$ contains $K_3$ as an induced subgraph, we proceed in the same manner as \autoref{thm:ergm-p-sharp}. However, we have to simulate the weights on triangle and edge subgraphs using indicator features for copies of~$H$.  To do so we observe that because $H$ contains $K_3$ it also contains $K_2$ and so two applications of \autoref{lem:feature-replace} allow us to reduce the instance of \prob{\#3REG-BIP-MATCH} to an ERGM using subgraph indicator features of graphs in $S$.

If $H$ contains $P_2$ as an induced subgraph, we instead reduce from \prob{\#MATCHING}.  Given a bipartite graph, $G$, as input, we create for each edge in $G$ an indicator feature for that edge with weight ${n \choose 2}$.  For edges not in $G$ and any $P_2$ in $G$, we create an indicator feature for that subgraph with weight $\beta = -{n \choose 2}^2 - {n \choose 2} - 1$. By an argument similar to \autoref{lem:ergm-digits}, if $d_i$ is the number of matchings of $G$ with $i$ edges, then the partition function in base $2^{n\choose 2}$ is $d_{n\choose2}\dots d_2d_1d_0$.  Thus $d_n$,the $n$-th digit of $Z$, counts the number of perfect matchings.  Now using \autoref{fig:featurereplacement} we can reduce this ERGM using $P_2$ and $K_2$ to another ERGM using $H$.
\end{proof}

\begin{theorem}\label{thm:ergm-dichotomy}
Given a set of subgraphs, $S$.  If $S$ contains a graph on three or more vertices, \prob{$S$-ERGM-PARTITION} is \cc{\#P}-hard and can be computed in polynomial time otherwise.
\end{theorem}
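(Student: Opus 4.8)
The plan is to recognize that this dichotomy is simply the conjunction of the two lemmas already established in this section, combined with the observation that their hypotheses are exhaustive and mutually exclusive. Every set $S$ of graphs either consists entirely of graphs on at most two vertices, or else contains at least one graph on three or more vertices; there is no third possibility. So I would split into exactly these two cases and invoke the corresponding lemma in each.

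In the case where $S$ contains some graph $H$ with three or more vertices, I would apply \autoref{lem:ergm-three-hard} directly to conclude that \prob{$S$-ERGM-PARTITION} is \cc{\#P}-hard. In the complementary case, where every graph in $S$ has at most two vertices, I would apply \autoref{lem:ergm-two-easy} to conclude that \prob{$S$-ERGM-PARTITION} can be computed in polynomial time. Since these two cases cover all sets $S$ and never overlap, the dichotomy follows at once.

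There is no genuine obstacle here: all of the substantive content lives in the lemmas, namely the product-formula computation of $Z$ when only $K_2$ and $\overline{K}_2$ features are available (\autoref{lem:ergm-two-easy}), and the feature-replacement machinery that lifts the base hardness results for $K_3$ and $P_2$ up to an arbitrary larger feature $H$ (\autoref{lem:feature-replace} feeding \autoref{lem:ergm-three-hard}). The only point worth flagging is that the hardness direction is established under the integer-weight convention assumed in \autoref{lem:feature-replace}, so the theorem should be read with that convention in force; this does not weaken the conclusion, since the hard instances we construct already use integer weights.
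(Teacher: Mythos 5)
Your proof is correct and matches the paper's own argument, which likewise derives the theorem as an immediate consequence of \autoref{lem:ergm-two-easy} and \autoref{lem:ergm-three-hard} applied to the two exhaustive, disjoint cases. Your remark about the integer-weight convention inherited from \autoref{lem:feature-replace} is a reasonable clarification but adds nothing beyond what the paper implicitly assumes.
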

\begin{proof}
The result follows from \autoref{lem:ergm-two-easy} and \autoref{lem:ergm-three-hard}.
\end{proof}

\section{Conclusion}

We have shown \prob{ERGM-PARTITION} to be \cc{\#P}-hard and inapproximable via reductions that are very close to the natural usage of ERGMs.  Additionally, we showed that the hardness of ERGM partition function computation can be classified by their subgraph features where ERGMs that use subgraphs with more than two vertices lead to hard to compute partition functions.  All but the simplest of ERGMs are fundamentally difficult to work with and if the ability to sample from the distribution is required, then different distributions on graphs are necessary.

\newpage
{\raggedright
\bibliographystyle{abuser}
\bibliography{paper}}

\end{document}